\let\csname equation*\endcsname\relax
\let\csname endequation*\endcsname\relax
\newcommand{\bra}[1]{\langle #1|}
\newcommand{\ket}[1]{|#1\rangle}
\newcommand{\bracket}[2]{\langle #1 | #2 \rangle}
\newcommand{\abs}[1]{\lvert #1\rvert}
\newtheorem{theorem}{Theorem}
\newtheorem{corollary}{Corollary}
\definecolor{shadecolor}{rgb}{0.92,0.92,0.92}
\begin{document}

\title[]{No-go theorems for deterministic purification and probabilistic enhancement of coherence}

\author{Qiming Ding $^{1,2,\dagger}$}
\address{$^{1}$ \quad Center on Frontiers of Computing Studies, Department of Computer Science, Peking University, Beijing 100080, China.}
\address{$^{2}$ \quad School of Physics, Shandong University, Jinan 250100, China.}
\ead{dqiming94@gmail.com}

\author{Quancheng Liu $^{3,\dagger}$}
\address{$^{3}$ \quad Department of Physics, Institute of Nanotechnology and Advanced Materials, Bar-Ilan University, Ramat-Gan 52900, Israel}
\ead{qcliu.ac@gmail.com}

\address{$^{\dagger}$ \quad These authors contributed equally to this work.}

\vspace{10pt}
\begin{indented}
\item[]
\end{indented}

\begin{abstract}
The manipulation of quantum coherence is one of the principal issues in the resource theory of coherence, with two critical topics being the purification and enhancement of coherence. Here, we present two no-go theorems for the deterministic purification of coherence and the probabilistic enhancement of coherence, respectively. Specifically, we prove that a quantum state cannot be deterministically purified if it can be expressed as a convex combination of an incoherent state and a coherent state. Besides, we give an easy-to-verified sufficient and necessary condition to determine whether a state can be probabilistically enhanced via a stochastic strictly incoherent operation (sSIO). Our findings provide two feasibility criteria for the deterministic purification and the probabilistic enhancement of coherence, respectively. These results have repercussions on the understanding of quantum coherence in real quantum systems.
\end{abstract}
~~~~~~~~\providecommand{\keywords}[1]{Keywords:~~#1}
~~~\keywords{quantum coherence, coherence manipulation, no-go theorem.}

%
\submitto{\JPA}
%
\maketitle
%

\section{Introduction}

Quantum coherence is a fundamental feature of quantum mechanics, describing the capability of a quantum state to exhibit quantum interference phenomena~\cite{nielsen_Quantum_2010}. It is an essential ingredient in quantum information processing and also plays a significant role in many other fields, such as quantum metrology~\cite{giovannetti_Advances_2011,demkowicz-dobrzanski_Using_2014}, nanoscale thermodynamics~\cite{cwiklinski_Limitations_2015,lostaglio_Description_2015,lostaglio_Quantum_2015,narasimhachar_Lowtemperature_2015}, and quantum biology~\cite{lloyd_Quantum_2011,huelga_Vibrations_2013,lambert_Quantum_2013}. Recently, the quantification of coherence has attracted a growing interest from the perspective of resource theory~\cite{baumgratz_Quantifying_2014,streltsov_Colloquium_2017,hu_Quantum_2018}. 

Any quantum resource theory is characterized by two fundamental ingredients, namely, free states and free operations~\cite{chitambar_Quantum_2019}. For the resource theory of coherence, the free states are quantum states which are diagonal in a prefixed reference basis. The free operations are not uniquely specified. Motivated by practical considerations, several free operations were presented, such as maximally incoherent operations (MIO)~\cite{aberg_Quantifying_2006}, incoherent operations (IO)~\cite{baumgratz_Quantifying_2014}, strictly incoherent operations (SIO)~\cite{winter_Operational_2016,yadin_Quantum_2016}, genuinely incoherent operations (GIO)~\cite{vicente_Genuine_2016}, and physical incoherent operations (PIO)~\cite{chitambar_Comparison_2016,chitambar_Critical_2016}.
With these notions, the essential conditions of coherence measures have been established~\cite{baumgratz_Quantifying_2014,streltsov_Colloquium_2017}. Based on these conditions, a number of legitimate coherence measures also have been proposed ~\cite{napoli_Robustness_2016,yu_Alternative_2016,liu_New_2017,qi_Measuring_2017,zhou_Polynomial_2017,xi_Coherence_2019,xi_Epsilonsmooth_2019,yao_Quantum_2019,yu_Quantifying_2020,li_Quantum_2021,cui_Maximalvalue_2020}, which can be used to study the role of coherence in many physical contexts  quantitatively~\cite{bromley_Frozen_2015,yu_Measureindependent_2016,streltsov_Entanglement_2016,ma_Coherence_2019,zhang_Estimating_2018,yu_Detecting_2019,ding_Efficient_2021}. Many studies have shown that the  amount of quantum coherence is directly related to the success or failure of some quantum information processing tasks, such as quantum phase discrimination~\cite{napoli_Robustness_2016}, quantum algorithm~\cite{hillery_Coherence_2016,shi_Coherence_2017}, and the secrete key rate
in quantum key distribution~\cite{ma_Operational_2019}.

Since any real quantum system inevitably interacts with the surrounding  environment. Such interactions generally spoil the coherence of the quantum states and further weaken the ability of the system to perform quantum information processing tasks. Thus, a critical issue is to investigate coherence manipulations via free operations. Many protocols of coherence manipulation have been proposed based on various physical scenarios~\cite{yuan_Intrinsic_2015,winter_Operational_2016,chitambar_Assisted_2016,fang_Probabilistic_2018,regula_OneShot_2018,chen_Oneshot_2019,torun_optimal_2019,zhao_OneShot_2019,du_Conditions_2015,bu_Catalytic_2016,bu_Maximum_2017,zhang_Oneshot_2020,liu_Catalystassisted_2020,pang_Probabilistic_2020,lami_Completing_2020,xing_Reduce_2020,liu_Optimal_2021,ding_Tightness_2021}, and some have been demonstrated in the experiment~\cite{wu_Experimentally_2017,wu_Quantum_2020,wu_Experimental_2021,xiong_Experimental_2021b}. 

In this work, we focus on two coherence manipulation protocols. One is the deterministic purification of coherence, which describes the process of extracting a pure coherent state from a general state via any possible free operations, such as the previously studied SIO~\cite{liu_Deterministic_2019}. Another coherence manipulation is the probabilistic enhancement protocol, which is the process that enhances the coherence of the quantum state via a stochastic strictly incoherent operation (sSIO)~\cite{liu_Enhancing_2017}. We prove two no-go theorems for the deterministic purification and probabilistic enhancement of coherence, respectively, which can be regarded as the complement of two coherence manipulation protocols.
More explicitly, we define a coherence measure and prove that a state cannot be deterministically purified via free operations when the state can be written as a convex combination of an incoherent state and a coherent state.
Besides, we give the sufficient and necessary conditions to distinguish whether the coherence of a quantum state can be probabilistically enhanced via sSIO based on the $l_1$-norm of coherence. 

The rest of this paper is organized as follows. In Sec.~\ref{sec:sectionII}, we present some preliminaries. In Sec.~\ref{sec:sectionIII}, we define a coherence measure and prove the no-go theorem for the deterministic purification of coherence.
In Sec.~\ref{sec:sectionIV}, we prove the no-go theorem for the probabilistic enhancement of coherence based on the $l_1$-norm of coherence. In Sec.~\ref{Sec:example}, we apply our results to the case of qubit states, and show these state that can neither be  deterministically purified nor probabilistically enhanced in the Bloch sphere. In Sec.~\ref{sec:sectionV}, we present our conclusions.

 
\section{Preliminaries}
\label{sec:sectionII}

\subsection{Resource Theory of Quantum Coherence}

In this work, we follow the framework of the resource theory of quantum coherence by Baumgratz \emph{et al.}~\cite{baumgratz_Quantifying_2014}. Let $\{\ket{i}\}_{i=0,1,...,d-1}$ be a prefixed basis in the $d$-dimensional Hilbert space. In the resource theory of quantum coherence, the density matrices that are diagonal in this specific basis are incoherent, being of the form $\sum_{i=0}^{d-1} p_i \ket{i}\bra{i}$ with probabilities $p_i$, and the set of incoherent states is denoted as $\mathcal{I}$. 
The coherent states are those not of this form. 
An IO~\cite{baumgratz_Quantifying_2014} is defined by a completely positive and trace preserving map, $\Lambda(\rho)=\sum_{n=1}^{N} K_{n} \rho K_{n}^{\dagger}$, where the Kraus operators fulfill not only $\sum_{n=1}^{N} K_{n}^{\dagger} K_{n}=I$ but also $K_{n} \mathcal{I} K_{n}^{\dagger} \subset \mathcal{I},$
i.e., each $K_{n}$ maps an incoherent state to an incoherent state.
An IO is called a SIO if each $K_n$ also satisfies $K_n^\dagger\mathcal{I}K_n\subset\mathcal{I}$~\cite{winter_Operational_2016,yadin_Quantum_2016}.

A functional $C$ can be taken as a coherence measure if it satisfies the four conditions~\cite{baumgratz_Quantifying_2014},

(C1) Non-negativity: $C(\rho)\geq 0$, and $C(\rho)=0$ if and only if $\rho \in \mathcal{I} $.

(C2) Monotonicity: C does not increase under the action of incoherent operations, i.e., $C (\rho) \geq C (\Lambda{ (\rho)}) $ if $\Lambda$ is  an incoherent operation.

(C3) Strong monotonicity: $C$ does not increase on average under selective incoherent operations, i.e.,
\begin{eqnarray}
	\sum_{i} q_{i} C\left(\sigma_{i}\right) \leq C(\rho),
\end{eqnarray}
with probabilities $q_{i}=\operatorname{Tr}[K_{n} \rho K_{n}^{\dagger}]$, postmeasurement states $\sigma_{i}=K_{n} \rho K_{n}^{\dagger} / q_{i}$, and incoherent Kraus operators $K_{n}$.

(C4) Convexity: $C$ is a convex function of the state, i.e.,
\begin{eqnarray}
	\sum_{i} p_{i} C\left(\rho_{i}\right) \geq C\left(\sum_{i} p_{i} \rho_{i}\right) .
\end{eqnarray}


The $l_1$-norm of coherence~\cite{baumgratz_Quantifying_2014} has been proven satisfying above four conditions, and its explicit expression is
\begin{eqnarray}
	C_{l_1}(\rho)=\sum_{i\neq j}\abs{\rho_{ij}}.
	\label{L}
\end{eqnarray}

\subsection{Probabilistic enhancement of coherence}
It is known that SIO does not increase the coherence of a state, i.e., $C(\Lambda(\rho))\leq C(\rho)$. However, when we only pick out those $\rho_n$ satisfying $C(\rho_n)>C(\rho)$ and discard other $\rho_n$ with smaller $C(\rho_n)$, we can probabilistically get a mixed state ${\sum_n}_{,C(\rho_n)>C(\rho)} p_n \rho_n$, which has a larger value of coherence. Therefore, we may enhance the coherence of a state via sSIO, denoted as $\Lambda_s(\rho)$, defined by
\begin{eqnarray}
	\Lambda_s(\rho)=\frac{\sum_{n=1}^L K_{n}\rho K_{n}^{\dagger}}{\Tr(\sum_{n=1}^LK_{n}\rho K_{n}^{\dagger})},
	\label{lams}
\end{eqnarray}
where $\{K_{1},K_{2},\dots, K_{L}\}$ is a subset of $N$ strictly incoherent Kraus operators of a SIO, and satisfies $\sum_{n=1}^L K_{n}^{\dagger}K_{n}\leq I$.

For the sake of convenience, we recall some results from Ref.~\cite{liu_Enhancing_2017} without proofs.

For a state, $\rho=\sum_{ij}\rho_{ij}\ket{i}\bra{j}$, we further define three matrices $|\rho|$, $\rho_d$, and $\rho_d^{-\frac12}$, where $|\rho|=\sum_{ij}|\rho_{ij}|\ket{i}\bra{j}$, $\rho_d=\sum_i\rho_{ii}\ket{i}\bra{i}$, and $\rho_d^{-\frac12}$ is a diagonal matrix with elements
\begin{eqnarray}
	(\rho_d^{-\frac12})_{ii}=\left\{\begin{array}{ll} \rho_{ii}^{-\frac12}, &\text{if} ~ \rho_{ii}\neq0;\\
		0,&\text{if}~ \rho_{ii}= 0. \end{array}\right.
	\label{rho1/2}
\end{eqnarray}

With these notions, the maximal value of coherence that can be achieved for a state via a sSIO by means of the $l_1$-norm of coherence is expressed as~\cite{liu_Enhancing_2017},
\begin{eqnarray}
	\max_{\Lambda_s}C_{l_1}\left(\Lambda_s(\rho)\right)=\lambda_{\max}(\rho_d^{-\frac12}\abs{\rho}\rho_d^{-\frac12})-1,
	\label{theorem}
\end{eqnarray}
where $\lambda_{\max}(\rho_d^{-\frac12}\abs{\rho}\rho_d^{-\frac12})$ represents the maximum eigenvalue of the matrix $\rho_d^{-\frac12}\abs{\rho}\rho_d^{-\frac12}$.

\section{No-go theorem for the deterministic purification of coherence}
\label{sec:sectionIII}

We first propose a new functional and then prove it fulfills the above four conditions of the coherence measure. The definition is based on convex-roof construction. We define our measure for a pure state that sets zero for all pure incoherent states and one for all pure coherent states, named as Boolean-valued coherence measure, i.e.,
\begin{equation}
	C_{B}(\ket{\psi}) =  \begin{cases}
		1 , & \text{if} \quad  \ket{\psi}  \text{ is a coherent state}\\
		0 , & \text{if} \quad  \ket{\psi}  \text{ is an incoherent state}\\
	\end{cases}.
	\label{eq:trivalcohernce}
\end{equation}
 
We then extend our definition to the general case via convex-roof construction,
\begin{equation}
	C_{B}(\rho)=\inf_{\{p_{i},\ket{\psi_{i}}\}}\sum_{i=1}^{n} p_{i}C_{B}(\ket{\psi_{i}}),
	\label{eq:trivalconvexroof}
\end{equation}
where the infimum is taken over all the ensembles of $\{p_{i},\ket{\psi_{i}}\}$, i.e., $\rho = \sum_{i} p_{i}\ket{\psi_{i}}\bra{\psi_{i}} $.

Next, we prove that the functional $C_{B}$ is a legitimate coherence measure satisfied conditions (C1)-(C4).
\begin{proof}
	First, we show that the functional $C_{B}$ satisfies condition (C1). By definition, there is $C_{B} \geq 0$. Since an incoherent state $\delta$ admits a decomposition of the form $\delta = \sum_{i=0}^{d-1}p_{i} \ket{i}\bra{i}$, thus, $C_{B}(\delta) \leq \sum_{i} p_{i} C_{B}(\ket{i}) = 0 $. Hence, $C_{B}(\delta) = 0$ for an incoherent state $\delta$. Conversely, suppose that $C_{B}(\rho)=0$ for a state $\rho$. Then, there exists an ensemble $\left\{p_{n},\left|\varphi_{n}\right\rangle\right\}$ of $\rho$ such that $\sum_{n} p_{n} C_{B}\left(\left|\varphi_{n}\right\rangle\right)=0$, which further leads to $C_{B}\left(\left|\varphi_{n}\right\rangle\right)=0$ for all $n$. It follows that each $\left|\varphi_{n}\right\rangle$ is an incoherent state, and so is $\rho$.
	
	Second, we prove that the functional $C_{B}$ satisfies condition (C4), i.e., $C_{B}(\sum_{i}p_{i}\rho_{i}) \leq \sum_{i}p_{i} C_{B}(\rho_{i})$. For any $\rho_{i}$, we have a decomposition such as $\rho_{i}=\sum_{j}p_{j}^{i}|\psi_{j}^{i}\rangle \langle\psi_{j}^{i}|$ achieving the infimum in the definition of $ {C}_{B}\left(\rho_{i}\right)$.
	
	Thus,
	\begin{eqnarray}
		\begin{aligned}
			\sum_{i} p_{i} C_{B}\left(\rho_{i}\right) &=\sum_{i} p_{i} \sum_{j} p_{j}^{i}  {C}_{B}\left(\left|\psi_{j}^{i}\right\rangle\right) \\
			&=\sum_{i} \sum_{j} p_{i} p_{j}^{i}  {C}_{B}\left(\left|\psi_{j}^{i}\right\rangle\right) \\
			& \geq  {C}_{B}\left(\sum_{i} \sum_{j} p_{i} p_{j}^{i}\left|\psi_{j}^{i}\right\rangle\left\langle\psi_{j}^{i}\right| \right)\\
			&= {C}_{B}\left(\sum_{i} p_{i} \rho_{i}\right),
		\end{aligned}
		\label{eq:convex_coh}
	\end{eqnarray}
where the inequality follows from the definition of $C_{B}$.
	
	Third, we show that the functional $ {C}_{B}$ satisfies condition (C3).
	
	Let $ {C}_{B}(\rho)=\sum_{i} q_{i} C_{B}\left(\left|\psi_{i}\right\rangle\left\langle\psi_{i}\right|\right)$ with $\rho=\sum_{i} q_{i}\left|\psi_{i}\right\rangle\left\langle\psi_{i}\right|$,
	\begin{eqnarray}
		\begin{aligned}
			\rho_{n} &=\frac{1}{p_{n}} K_{n} \rho K_{n}^{\dagger}=\frac{1}{p_{n}} K_{n} \sum_{i} q_{i}\left|\psi_{i}\right\rangle\left\langle\psi_{i}\right| K_{n}^{\dagger} \\
			&=\sum_{i} \frac{q_{i}}{p_{n}} K_{n}\left|\psi_{i}\right\rangle\left\langle\psi_{i}\right| K_{n}^{\dagger} \\
			&=\sum_{i} \frac{q_{i}}{p_{n}} \operatorname{Tr}\left(K_{n}\left|\psi_{i}\right\rangle\left\langle\psi_{i}\right| K_{n}^{\dagger}\right) \frac{K_{n}\left|\psi_{i}\right\rangle\left\langle\psi_{i}\right| K_{n}^{\dagger}}{\operatorname{Tr}\left(K_{n}\left|\psi_{i}\right\rangle\left\langle\psi_{i}\right| K_{n}^{\dagger}\right)}.
		\end{aligned}
	\end{eqnarray}
	
	Then, 
	\begin{eqnarray}
		\begin{aligned}
			 {C}_{B}(\rho) &=\sum_{i} q_{i} C_{B}\left(\left|\psi_{i}\right\rangle\left\langle\psi_{i}\right|\right) \\
			&=\sum_{i} q_{i} \sum_{n}\left\langle\psi_{i}\left|K_{n}^{\dagger} K_{n}\right| \psi_{i}\right\rangle C_{B}\left(\left|\psi_{i}\right\rangle\left\langle\psi_{i}\right|\right) \\
			& \geq \sum_{n} \sum_{i} q_{i}\left\langle\psi_{i}\left|K_{n}^{\dagger} K_{n}\right| \psi_{i}\right\rangle  {C}_{B}\left(\frac{K_{n}\left|\psi_{i}\right\rangle\left\langle\psi_{i}\right| K_{n}^{\dagger}}{\operatorname{Tr}\left(K_{n}\left|\psi_{i}\right\rangle\left\langle\psi_{i}\right| K_{n}^{\dagger}\right)}\right) \\
			&=\sum_{n} p_{n} \sum_{i} \frac{q_{i}}{p_{n}}\left\langle\psi_{i}\left|K_{n}^{\dagger} K_{n}\right| \psi_{i}\right\rangle  {C}_{B}\left(\frac{K_{n}\left|\psi_{i}\right\rangle\left\langle\psi_{i}\right| K_{n}^{\dagger}}{\operatorname{Tr}\left(K_{n}\left|\psi_{i}\right\rangle\left\langle\psi_{i}\right| K_{n}^{\dagger}\right)}\right) \\
			& \geq \sum_{n} p_{n}  {C}_{B}\left(\sum_{i} \frac{q_{i}}{p_{n}} \operatorname{Tr}\left(K_{n}\left|\psi_{i}\right\rangle\left\langle\psi_{i}\right| K_{n}^{\dagger}\right) \frac{K_{n}\left|\psi_{i}\right\rangle\left\langle\psi_{i}\right| K_{n}^{\dagger}}{\operatorname{Tr}\left(K_{n}\left|\psi_{i}\right\rangle\left\langle\psi_{i}\right| K_{n}^{\dagger}\right)}\right) \\
			&=\sum_{n} p_{n}  {C}_{B}\left(\rho_{n}\right).
		\end{aligned}
	\end{eqnarray}
The first inequality is based on the definition of ${C}_{B}$, and the second inequality is with the aid of the convexity of ${C}_{B}$.

Since (C3) and (C4) imply (C2), we obtain that the functional $C_{B}(\rho)$ also satisfies condition (C2). This completes the proof .
\end{proof}

By the way, a similar coherence measure, denoted as $C_{trivial}$, has been proposed in Ref.~\cite{peng_Maximally_2016}.  $C_{trivial}(\rho)$ is zero if $\rho$ is an incoherent state and $C_{trivial}(\rho)$ is one if $\rho$ is a coherent state. Here we should note that $C_{trivial}(\rho)$ does not satisfy condition (C4) for mixed states. To see this, suppose we have two quantum states,
\begin{equation}
	\rho_1=\left(
	\begin{array}{cc}
		1/2  & 1/2  \\
		1/2  & 1/2  
	\end{array}
	\right), \quad
	\rho_2=\left(
	\begin{array}{cc}
	1 & 0 \\0   & 0
	\end{array}
	\right) .
\end{equation}
The other state $\rho$ is defined as
\begin{equation}
	\rho=\frac{1}{2} \rho_{1} +\frac{1}{2} \rho_{2} =\left(
	\begin{array}{cc}
		3/4  & 1/4\\
		1/4  & 1/4  
	\end{array}
	\right).
\end{equation}
Note that $\rho_{1}$ and $\rho$ are the coherent states, while $\rho_2$ is an incoherent state. Then, we have 
\begin{eqnarray}
 C_{trivial}(\rho ) =1  > \frac{1}{2}= \frac{1}{2} C_{trivial}(\rho_1) + \frac{1}{2} C_{trivial}(\rho_2).
\end{eqnarray}
Thus,  $C_{trivial}$ does not satisfy condition (C4) for mixed states.


Based on the coherence measure $C_{B}$, we arrive at our first result.
\begin{theorem}
	The quantum state $\rho$ cannot be deterministically purified via any free operations in the resource theory of coherence if it can be written as a convex combination of an incoherent state and a coherent state.
	\label{Ob:1}
\end{theorem}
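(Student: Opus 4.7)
The plan is to exploit the Boolean coherence measure $C_B$ just established and reduce the theorem to a one-line monotonicity argument. The central observation is that $C_B$ saturates its universal upper bound of $1$ on every pure coherent state, while convexity (C4) together with (C1) force $C_B(\rho) < 1$ whenever $\rho$ admits a decomposition of the stated form. Any free operation must therefore send $\rho$ to a state whose value of $C_B$ remains strictly below $1$, which rules out any pure coherent target.

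Concretely, I would first record the sharp bound $C_B(\rho) \le 1 - p$: given $\rho = p \sigma_I + (1-p) \sigma_C$ with $\sigma_I \in \mathcal{I}$, $\sigma_C$ an arbitrary coherent state and $p \in (0,1)$, convexity (C4) yields $C_B(\rho) \le p\,C_B(\sigma_I) + (1-p)\,C_B(\sigma_C)$, and the estimate follows from $C_B(\sigma_I) = 0$ by (C1) together with the universal bound $C_B(\sigma_C) \le 1$. The latter is immediate from the convex-roof definition (\ref{eq:trivalconvexroof}), since the pure-state functional (\ref{eq:trivalcohernce}) takes values in $\{0,1\}$. Next, any pure coherent $\ket{\phi}$ satisfies $C_B(\ket{\phi}\bra{\phi}) = 1$ by definition. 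Assuming, for contradiction, that some free operation $\Lambda$ deterministically purifies $\rho$, so that $\Lambda(\rho) = \ket{\phi}\bra{\phi}$ for a pure coherent $\ket{\phi}$, monotonicity (C2) then forces $1 = C_B(\Lambda(\rho)) \le C_B(\rho) \le 1-p < 1$, a contradiction.

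Since the main work has already been done in establishing (C1)-(C4) for $C_B$, I do not anticipate a genuine obstacle in the proof itself. A couple of minor points warrant attention for completeness: the hypothesis is to be understood as a nontrivial convex combination with $p \in (0,1)$, since the degenerate case $p = 0$ corresponds to $\rho$ being itself a coherent state (possibly already pure, and then trivially purifiable by the identity map); and the monotonicity invoked here remains valid across all standard classes of free operations (MIO, IO, SIO, GIO, PIO), because the convex-roof argument underlying (C2) for $C_B$ uses only the property $\Lambda(\mathcal{I}) \subseteq \mathcal{I}$ shared by all of them.
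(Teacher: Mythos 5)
Your proposal is correct and follows essentially the same route as the paper's own proof: bound $C_{B}(\rho)$ strictly below $1$ via convexity (C4) applied to the decomposition into an incoherent and a coherent part, then invoke monotonicity (C2) against the value $C_{B}=1$ of every pure coherent state. Your only additions are cosmetic but welcome clarifications (the explicit restriction to $p\in(0,1)$ and the remark on which classes of free operations the monotonicity covers), which the paper leaves implicit.
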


\begin{proof}[Proof of Theorem 1]
	Suppose there is a decomposition of the form, $\rho = \lambda \sigma + (1-\lambda) \tau$, where $\sigma$ is an incoherent state and $\tau$ is a coherent state.
	
	According to convexity of Eq.(\ref{eq:trivalconvexroof}), we have
	\begin{equation}
		C_{B}(\rho) \leq \lambda C_{B}(\sigma)+(1-\lambda)C_{B}(\tau)=(1-\lambda)C_{B}(\tau).
	\end{equation}
	Then, $C_{B} (\rho) $ must be less than 1, i.e., less than the coherence of all pure coherent states. According to condition (C2), no free operations in the resource theory of coherence can transform $\rho$ into a pure coherent state.
	
\end{proof}

The condition in Theorem \ref{Ob:1} can be transformed into a calculable function. Here, we introduce a functional called "coherent weight", $C_{W}{(\rho)}$~\cite{bu_Asymmetry_2018,yao_Anomalies_2020}, 
\begin{equation}
	C_{W}{(\rho)}=\min \{\gamma \geq 0: \rho=(1-\gamma) \sigma+ \gamma \tau, \sigma \in \mathcal{I}, \text { state } \tau \}
\end{equation}
and the corresponding numerical method is given in Ref.~\cite{bu_Asymmetry_2018}. The meaning of $C_{W}{(\rho)} = 1$ is that the state cannot be written as a convex combination of coherent state and incoherent state. 

\begin{corollary}
	It is impossible to deterministically transform a full-rank mixed coherent state to a pure coherent state via any free operations in the resource theory of coherence.
	\label{Corollary2}
\end{corollary}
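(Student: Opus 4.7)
The plan is to reduce Corollary~\ref{Corollary2} to Theorem~\ref{Ob:1} by exhibiting, for every full-rank mixed coherent state $\rho$, an explicit decomposition of $\rho$ as a convex combination of an incoherent state and a coherent state. Once such a decomposition is produced, Theorem~\ref{Ob:1} immediately prohibits deterministic purification of $\rho$.

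First I would fix $\rho$ full-rank and coherent, and denote its smallest eigenvalue by $\lambda_{\min}(\rho)>0$ (positive because $\rho$ is full-rank). Since the maximally mixed state $I/d$ is incoherent and satisfies $\lambda_{\max}(I/d)=1/d$, for any $\epsilon$ with $0<\epsilon<d\,\lambda_{\min}(\rho)$ the operator $\rho-(\epsilon/d)I$ is positive semidefinite. Setting
\begin{equation}
\tau=\frac{\rho-(\epsilon/d)I}{1-\epsilon},
\end{equation}
one obtains a legitimate density matrix $\tau$, and by construction
\begin{equation}
\rho=\epsilon\,\frac{I}{d}+(1-\epsilon)\,\tau,
\end{equation}
which is a convex combination of the incoherent state $I/d$ and the state $\tau$.

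Next I would argue that $\tau$ must itself be coherent. Indeed, if $\tau$ were incoherent, then $\rho$ would be a convex combination of two incoherent states and hence incoherent, contradicting the hypothesis that $\rho$ is coherent. Therefore $\rho$ admits a decomposition of exactly the form required by Theorem~\ref{Ob:1}, so no free operation in the resource theory of coherence can deterministically map $\rho$ to a pure coherent state.

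There is no serious obstacle in this argument; the only delicate points are making sure $\epsilon$ is chosen small enough to keep $\rho-(\epsilon/d)I\succeq 0$ (which is exactly why full-rankness is used) and verifying that $\tau$ inherits coherence (handled by the one-line contradiction above). The statement then follows directly from Theorem~\ref{Ob:1}.
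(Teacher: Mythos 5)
Your proof is correct and follows essentially the same route as the paper: the paper likewise uses full-rankness to subtract a multiple of the identity, writing $\rho$ as a convex mixture of the incoherent maximally mixed state $I/n$ and a normalized remainder (with weight $n\lambda_{\min}$ on $I/n$), and then applies the convexity bound behind Theorem~\ref{Ob:1}. Your strict choice $0<\epsilon<d\,\lambda_{\min}$ and the explicit check that $\tau$ is coherent are only minor refinements of the same argument.
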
	

\begin{proof}[Proof of Corollary 1]
	Suppose there is an $n$-dimensional full-rank mixed coherence state $\rho$, its form is $\rho=\sum_{i=1}^{n} \lambda_{i} \ket{\psi_{i}}\bra{\psi_{i}}$, thus $\lambda_{i} > 0$ for all $\lambda$. $\lambda_{min}$ is the minimum eigenvalue of the density matrix $\rho$. We construct a new matrix,
	\begin{equation}
		\delta = \rho - \lambda_{min}I,
	\end{equation}
	where $I$ denotes identity matrix. Thus, $\rho = \delta + \lambda_{min}I$.
	
	Normalizing $\delta$ and $\lambda_{min}I$, we obtain
	\begin{equation}
		\rho = (1- n \lambda_{min})(\frac{1}{1-n \lambda_{min}}\delta)+n \lambda_{min}(\frac{1}{n}I).
	\end{equation}
	Note that $n \lambda_{min}$ is not greater than 1 due to $\lambda_{min}$ is the minimum eigenvalue of the density matrix $\rho$. Then, $C_{B} (\rho) $ must be less than 1, i.e., less than the coherence of all pure coherent states. According to condition (C2), no free operations in the resource theory of coherence can transform a full-rank mixed coherent state into a pure coherent state.
	
\end{proof}

It is well known that if $\operatorname{supp}(\sigma) \subseteq \operatorname{supp}(\rho)$, then there is a decomposition of the form $\rho = \lambda \sigma + (1-\lambda) \tau$ with  $\tau$ is a density operator and $0 < \lambda \leq 1$~\cite{rudolph_Quantum_2004}, where $\operatorname{supp}(\rho)$ is the support of an Hermitian operator $\rho$, which is the vector space spanned by the eigenvectors of $\rho$ with non-zero eigenvalues. In Corollary \ref{Corollary2}, we consider a full-rank density matrix, so its support is the subspace of its eigenvectors. The maximally mixed state is incoherent in their support. Thus, it is impossible to deterministically purifying a full-rank mixed coherence state. The physical meaning of this result is that any full-rank state transformed from a pure state because of environmental noise will be an irremediable loss of coherence. We notice that this proof could be a simple alternative proof for the no-go theorem for quantum coherence deterministic purification in the case of a full-rank coherent state. [see~\cite{fang_Probabilistic_2018}, Theorem 3]. Meanwhile, our result is consistent with the coherent deterministic purification via SIO ~\cite{liu_Deterministic_2019}.

\section{No-go theorem for the probabilistic enhancement of coherence}
\label{sec:sectionIV}
It is worth noting that sSIO can increase the coherence of a state, i.e., $C(\Lambda_{s}(\rho))\geq C(\rho)$. 
Therefore, the no-go theorem for the probabilistic enhancement of coherence means that the coherence is the same as before without undergoing the sSIO, that is,
\begin{equation}
	\max_{{  \Lambda}_s}C_{l_1}\left({\Lambda}_s(\rho)\right) = {  \lambda}_{\max}(\rho_d^{-\frac12}\abs{\rho}\rho_d^{-\frac12})-1 = C_{l_1}(\rho).
	\label{condition}
\end{equation}

By solving the equation~(\ref{condition}), we can obtain the set of states that cannot be probabilistically enhanced, and the complement of the set is precisely the scope of application of the probabilistic enhancement of coherence. In the following,  we give the sufficient and necessary conditions to judge a state whether can be probabilistically enhanced.

\begin{theorem}
	The $l_1$-norm of coherence of a coherent state $\rho$ cannot be enhanced via sSIO if and only if $\rho_{ii}=\frac{\sum_{n \neq i}\left|\rho_{in}\right|}{C_{l_1}(\rho)}$, where $n= 0,1, \cdots, d-1$.
	\label{thm:SEirreducible}
\end{theorem}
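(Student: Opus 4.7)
The plan is to recast the non-enhancement condition (\ref{condition}) as a Rayleigh-quotient identity for the symmetric matrix $M := \rho_d^{-\frac12}\abs{\rho}\rho_d^{-\frac12}$, evaluated at the specific test vector $\ket{w} := \sum_i \sqrt{\rho_{ii}}\,\ket{i}$. First I would observe $\bracket{w}{w}=\sum_i \rho_{ii}=1$, and then, using $M_{ij}=\abs{\rho_{ij}}/\sqrt{\rho_{ii}\rho_{jj}}$ on the support of $\rho_d$, verify
\begin{equation}
\bra{w}M\ket{w}=\sum_{i,j}\abs{\rho_{ij}}=1+C_{l_1}(\rho).
\end{equation}
Indices with $\rho_{ii}=0$ cause no trouble: positive semi-definiteness of $\rho$ forces $\rho_{ij}=0$ for every $j$ in such a row, so both the corresponding component of $\ket{w}$ and the corresponding row/column of $M$ vanish.

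Since $\abs{\rho}$ is symmetric (from $\rho_{ji}=\overline{\rho_{ij}}$), so is $M$, and the Rayleigh--Ritz principle then yields $\lambda_{\max}(M)\geq 1+C_{l_1}(\rho)$, with equality \emph{if and only if} $\ket{w}$ saturates the maximum Rayleigh quotient, which for a symmetric operator is the same as $\ket{w}$ being a top eigenvector: $M\ket{w}=(1+C_{l_1}(\rho))\ket{w}$. This biconditional is what closes the theorem, because (\ref{condition}) identifies non-enhancement precisely with the equality $\lambda_{\max}(M)=1+C_{l_1}(\rho)$.

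The final step is to unpack the eigenvalue equation coordinate by coordinate. For any $i$ with $\rho_{ii}>0$,
\begin{equation}
\bra{i}M\ket{w}=\frac{1}{\sqrt{\rho_{ii}}}\sum_{j}\abs{\rho_{ij}}=\frac{\rho_{ii}+\sum_{j\neq i}\abs{\rho_{ij}}}{\sqrt{\rho_{ii}}},
\end{equation}
and setting this equal to $(1+C_{l_1}(\rho))\sqrt{\rho_{ii}}$ collapses to $\sum_{j\neq i}\abs{\rho_{ij}}=C_{l_1}(\rho)\rho_{ii}$, exactly the condition in the statement; the case $\rho_{ii}=0$ reduces to $0=0$ by the PSD remark above. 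The main obstacle I anticipate is purely bookkeeping: justifying cleanly that saturation of the Rayleigh quotient at $\ket{w}$ forces $\ket{w}$ to lie in the top eigenspace of $M$. This follows from the spectral decomposition $v=\sum_k c_k e_k$ giving $\bra{v}M\ket{v}=\sum_k\abs{c_k}^2\lambda_k$, which equals $\lambda_{\max}(M)\norm{v}^2$ only if $c_k=0$ for every $\lambda_k<\lambda_{\max}(M)$, together with care about restricting everything to the support of $\rho_d$.
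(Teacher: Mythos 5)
Your ``only if'' direction is essentially the paper's own argument and is fine: you compute $\bra{w}M\ket{w}=1+C_{l_1}(\rho)$ at the unit vector $\ket{w}=\sum_i\sqrt{\rho_{ii}}\ket{i}$, use the Rayleigh--Ritz equality criterion to conclude $M\ket{w}=\lambda_{\max}(M)\ket{w}$, and unpack the eigenvalue equation row by row (your handling of rows with $\rho_{ii}=0$ via positive semidefiniteness is a nice extra care). The genuine gap is in the converse, i.e.\ the claim that the diagonal condition implies non-enhancement. From the condition you correctly obtain $M\ket{w}=(1+C_{l_1}(\rho))\ket{w}$, but this only says $1+C_{l_1}(\rho)$ is \emph{an} eigenvalue, hence $\lambda_{\max}(M)\ge 1+C_{l_1}(\rho)$ --- which the Rayleigh bound already gave you. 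Non-enhancement requires the reverse inequality $\lambda_{\max}(M)\le 1+C_{l_1}(\rho)$, and your chain obtains it only by silently identifying ``eigenvector of $M$ with eigenvalue $1+C_{l_1}(\rho)$'' with ``top eigenvector'', which is circular: the Rayleigh--Ritz equality criterion says saturation holds iff $\ket{w}$ is an eigenvector \emph{whose eigenvalue equals $\lambda_{\max}(M)$}, not merely an eigenvector. Symmetry plus positivity of the entries of $\ket{w}$ is not enough by itself: the Hermitian matrix $\bigl(\begin{smallmatrix}1&-2\\-2&1\end{smallmatrix}\bigr)$ has the positive vector $(1,1)^{T}$ as an eigenvector with eigenvalue $-1$ while its largest eigenvalue is $3$.

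The missing ingredient, which the paper supplies, is the entrywise nonnegativity of $M=\rho_d^{-\frac12}\abs{\rho}\rho_d^{-\frac12}$: by Perron--Frobenius theory (Horn and Johnson), a nonnegative matrix that admits an eigenvector with strictly positive entries (here, on the support of $\rho_d$) has that eigenvalue equal to its spectral radius; since $M$ is Hermitian, $\lambda_{\max}(M)$ is bounded by the spectral radius, so $\lambda_{\max}(M)=1+C_{l_1}(\rho)$ follows and the ``if'' direction closes. Equivalently, you could insert the elementary bound that the spectral radius of a nonnegative matrix is at most $\max_i (Mw)_i/w_i$ for any entrywise positive vector $w$, which equals $1+C_{l_1}(\rho)$ under your hypothesis. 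With that one step added, your proof is correct and coincides with the paper's; as written, the enhancement-impossibility half of the theorem is not established.
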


\begin{proof}[Proof of Theorem 2]
	It is equivalent to prove that the maximum eigenvalue of $\rho_d^{-\frac12}\abs{\rho}\rho_d^{-\frac12}$ is $  C_{l_1}(\rho) + 1$ if and only if $\rho_{ii}=\frac{\sum_{n \neq i}\left|\rho_{in}\right|}{C_{l_1}(\rho)}$, where $n= 0,1,\cdots,d-1$.
	
	\emph{If part:} Assume $\rho$ satisfies $\rho_{ii}=\frac{\sum_{n \neq i}\left|\rho_{in}\right|}{C_{l_1}(\rho)}$, then we have
	\begin{eqnarray}
		\rho_{d}^{-\frac{1}{2}} \abs{\rho}\rho_{d}^{-\frac{1}{2}}=\sum_{ij}\frac{\left|\rho_{ij}\right|C_{l_1}(\rho)}{\sqrt{\sum_{n \neq i}\left|\rho_{in}\right|\sum_{m \neq j}\left|\rho_{jm}\right|}} \ket{i}\bra{j}.
	\end{eqnarray}
	
	It is difficult to directly get the maximum eigenvalue and corresponding eigenvector for a high-dimensional density matrix.
	Thus, we adopt the method of verification to explain that $\ket{\varphi}=\sum_{i=1}^{n} \sqrt{\rho_{ii}}\ket{i}$ is an eigenvector of $\rho_{d}^{-\frac{1}{2}} \abs{\rho}  \rho_{d}^{-\frac{1}{2}}$.
	
	Substituting $\ket{\varphi}=\sum_{i=1}^{n} \sqrt{\rho_{ii}}\ket{i}$ into  $\rho_{d}^{-\frac{1}{2}} \abs{\rho}  \rho_{d}^{-\frac{1}{2}}\ket{\varphi} = {\lambda} \ket{\varphi}$, we have
	\begin{equation}
		\sum_{ij}\sqrt{\frac{C_{l_1}(\rho)}{\sum_{m \neq j} \abs{\rho_{jm}}}} \abs{\rho_{ij}}\ket{i}= {\lambda} (\sum_{i=1}^{n} \sqrt{\rho_{ii}}\ket{i}).
	\end{equation}
	Verifying $i$-th row,
	\begin{equation}
		\left(\sqrt{\rho_{ii}}+\sum_{i \neq j}{\abs{\rho_{ij}}}\sqrt{\frac{C_{l_1}(\rho)}{\sum_{i \neq j} \abs{\rho_{ij}}}}\right)\ket{i}= ({\lambda} \sqrt{\rho_{ii}})\ket{i}.
		\label{eq:18}
	\end{equation}
	Rewriting Eq. (\ref{eq:18}) as,
	\begin{equation}
		\sqrt{\rho_{ii}}+{C_l}_{1}(\rho)\sqrt{\frac{\sum_{i \neq j}\left|\rho_{ij}\right|}{C_{l_{1}}(\rho)}}={\lambda}(\sqrt{\rho_{ii}}).
		\label{eq:19}
	\end{equation}
	Due to $\rho_{ii}=\frac{\sum_{n \neq  i}\left|\rho_{in}\right|}{C_{l_1}(\rho)}$, we have ${\lambda} =C_{l_{1}}(\rho)+1$.
	
	Note that $\rho_{d}^{-\frac{1}{2}} \abs{\rho}  \rho_{d}^{-\frac{1}{2}}$  is a non-negative matrix, i.e., all elements are greater than or equal to 0, and we have verified $\ket{\varphi}=\sum_{i=1}^{n} \sqrt{\rho_{ii}}\ket{i}$ is an eigenvector of $\rho_{d}^{-\frac{1}{2}} \abs{\rho}  \rho_{d}^{-\frac{1}{2}}$, and the eigenvalue corresponding to this eigenvector is ${\lambda} =C_{l_{1}}(\rho)+1$. Thus, the spectral radius of $\rho_{d}^{-\frac{1}{2}} \abs{\rho}  \rho_{d}^{-\frac{1}{2}}$ is equal to $ C_{l_{1}}(\rho)+1$~\cite{horn_Matrix_2012}.
	Besides,  $\rho_{d}^{-\frac{1}{2}} \abs{\rho} \rho_{d}^{-\frac{1}{2}}$ is an Hermitian matrix, hence, the eigenvalues of $\rho_{d}^{-\frac{1}{2}} \abs{\rho} \rho_{d}^{-\frac{1}{2}}$ are real. Combining with the above results, we have ${\lambda}_{max} =C_{l_{1}}(\rho)+1 $.
	
	\emph{Only if part:} The matrix $\rho_d^{-\frac12}\abs{\rho}\rho_d^{-\frac12}$ is Hermitian, and the maximum eigenvalue is $C_{l_1}(\rho)+1$. For any normalized vector $\ket{\varphi}$, we have $\bra{\varphi}\rho_{d}^{-\frac{1}{2}} \abs{\rho}  \rho_{d}^{-\frac{1}{2}}\ket{\varphi} \leq \lambda_{max}(\rho_{d}^{-\frac{1}{2}} \abs{\rho} \rho_{d}^{-\frac{1}{2}})$, with equality if and only if $ \rho_{d}^{-\frac{1}{2}} \abs{\rho}  \rho_{d}^{-\frac{1}{2}}\ket{\varphi} =\lambda_{max} \ket{\varphi}$~\cite{horn_Matrix_2012}. Suppose there is a normalized vector $\ket{\varphi}=\sum_{i=1}^{n} \sqrt{\rho_{ii}}\ket{i}$, we can get
	\begin{equation}
		\begin{split}
			\bra{\varphi}\rho_{d}^{-\frac{1}{2}} \abs{\rho}  \rho_{d}^{-\frac{1}{2}}\ket{\varphi}&=\sum_{ijkl} \frac{\left| \rho_{ij} \right|}{\sqrt{\rho_{ii}\rho_{jj}}}\sqrt{\rho_{kk}} \sqrt{\rho_{ll}}  \bracket{l}{i} \bracket{j}{k} \\&= \sum_{ij} \abs{\rho_{ij}} = C_{l_1}(\rho) +1.
		\end{split}
	\end{equation}
	Thus, we can deduce that $\ket{\varphi}=\sum_{i=1}^{n} \sqrt{\rho_{ii}}\ket{i}$ is the normalized eigenvector corresponding to the maximum eigenvalue $C_{l_1}(\rho) + 1$.
	
	Substituting $\ket{\varphi}=\sum_{i=1}^{n} \sqrt{\rho_{ii}}\ket{i}$  into the expression $\rho_{d}^{-\frac{1}{2}} \abs{\rho}  \rho_{d}^{-\frac{1}{2}}\ket{\varphi}={  \lambda}\ket{\varphi} $, we have	

	\begin{equation}
		\sum_{ij}\frac{\left| \rho_{ij} \right|}{\sqrt{\rho_{ii}}}\ket{i}= {  \lambda}\sum_{i}\sqrt{\rho_{ii}}\ket{i}.
		\label{10}
	\end{equation}
	It follows that
	\begin{equation}
		{\lambda} = 1+\frac{\sum_{j \neq 1} \abs{\rho_{1j}}}{\rho_{11}}=1+\frac{\sum_{j \neq 2} \abs{\rho_{2j}}}{\rho_{22}}=\cdots=1+\frac{\sum_{j \neq n} \abs{\rho_{nj}}}{\rho_{nn}}.
		\label{13}
	\end{equation}
	Thus,
	\begin{equation}
		\rho_{22}=\rho_{11}\frac{\sum_{j \neq 2} \abs{\rho_{2j}}}{\sum_{j \neq 1} \abs{\rho_{1j}}} \quad , \cdots , \quad
		\rho_{nn}=\rho_{11}\frac{\sum_{j \neq n} \abs{\rho_{nj}}}{\sum_{j \neq 1} \abs{\rho_{1j}}}.
		\label{14}
	\end{equation}
	Due to $\sum_{i} \rho_{ii}= 1$, we can obtain, from Eq. (\ref{14}),
	\begin{equation}
		\rho_{ii}=\frac{\sum_{j \neq i}\left|\rho_{ij}\right|}{C_{l_1}(\rho)}.
		\label{rhoii}
	\end{equation}
\end{proof}

\section{Applications}
\label{Sec:example}


Here, we apply these above theorems to the qubit system. From Theorem \ref{Ob:1} and Corollary \ref{Corollary2}, for a qubit state, we conclude: the environment-induced coherence loss from a pure state is irremediable and can not be reconstructed via the free operations in the resource theory of coherence. This result is consistent with the result of qubit deterministic transformation via MIO, IO, and SIO~\cite{streltsov_Structure_2017}.


Following Theorem \ref{thm:SEirreducible}, for a qubit state, we have:

\begin{corollary}
	The form of the coherent qubit state that can never be enhanced under sSIOs is
	
	\begin{equation}
		\abs{\rho}=\left(
		\begin{array}{cc}
			1/2  & x \\
			x & 1/2\\
		\end{array}
		\right), 0 < x \leq 1/2.
		\label{singlecondition}
	\end{equation}
	
\end{corollary}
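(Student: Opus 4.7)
The plan is to specialize Theorem~\ref{thm:SEirreducible} to the two-dimensional case and then impose the positivity and normalization constraints that a qubit density matrix must satisfy.

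First, I would write a generic qubit state as $\rho = \rho_{00}\ket{0}\bra{0} + \rho_{11}\ket{1}\bra{1} + \rho_{01}\ket{0}\bra{1} + \rho_{10}\ket{1}\bra{0}$ with $\rho_{00}+\rho_{11}=1$ and $\rho_{10}=\rho_{01}^{*}$, and compute $C_{l_1}(\rho) = |\rho_{01}|+|\rho_{10}| = 2|\rho_{01}|$ directly from Eq.~(\ref{L}). The coherence assumption forces $|\rho_{01}|>0$, so dividing by $C_{l_1}(\rho)$ is legitimate.

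Next, I would instantiate the criterion from Theorem~\ref{thm:SEirreducible}, namely $\rho_{ii} = \sum_{n\neq i}|\rho_{in}|/C_{l_1}(\rho)$, for both $i=0$ and $i=1$. Each of these two equations reduces to $\rho_{ii} = |\rho_{01}|/(2|\rho_{01}|) = 1/2$, so the condition is equivalent to $\rho_{00}=\rho_{11}=1/2$. Setting $x := |\rho_{01}|$ then yields exactly the absolute-value matrix displayed in Eq.~(\ref{singlecondition}).

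Finally, I would check that the range $0<x\leq 1/2$ is forced by the physical constraints on $\rho$. Coherence demands $x>0$. Positive semidefiniteness of $\rho$ requires $\det\rho = \rho_{00}\rho_{11}-|\rho_{01}|^{2} = 1/4 - x^{2} \geq 0$, which gives $x\leq 1/2$; conversely, every $x$ in this range produces a legitimate qubit density matrix (with the off-diagonal phase being immaterial, since Theorem~\ref{thm:SEirreducible} only involves $|\rho|$). I do not expect any real obstacle here: the argument is a direct substitution into Theorem~\ref{thm:SEirreducible} together with the elementary $2\times 2$ positivity check, and the only mild subtlety is simply noting that the criterion is phase-insensitive so that writing the condition in terms of $|\rho|$ rather than $\rho$ is justified.
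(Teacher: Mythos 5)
Your proof is correct, but it takes a somewhat different route from the paper's. You plug the qubit data directly into the algebraic criterion of Theorem~\ref{thm:SEirreducible}: with $C_{l_1}(\rho)=2|\rho_{01}|$, the conditions $\rho_{ii}=\sum_{n\neq i}|\rho_{in}|/C_{l_1}(\rho)$ for $i=0,1$ collapse to $\rho_{00}=\rho_{11}=1/2$, and you then obtain the admissible range $0<x\leq 1/2$ from coherence ($x>0$) and positivity ($\det\rho=1/4-x^{2}\geq 0$). The paper instead parametrizes the qubit on the Bloch sphere as in Eq.~(\ref{eq:singlequbit}), computes $\rho_d^{-1/2}|\rho|\rho_d^{-1/2}$ explicitly, and imposes the eigenvalue form of the no-enhancement condition, $\lambda_{\max}(\rho_d^{-1/2}|\rho|\rho_d^{-1/2})-1=C_{l_1}(\rho)$, which forces $\cos\theta=0$ and hence the maximally mixed diagonal with $x=r/2$. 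The two arguments are equivalent in substance; yours is the more economical one, since it uses the already-proved characterization of Theorem~\ref{thm:SEirreducible} as a black box and makes the constraint $x\leq 1/2$ transparent as the $2\times 2$ positivity condition, while the paper's Bloch parametrization builds positivity into $0<r\leq 1$ automatically and yields the geometric picture (states on the equatorial plane) that is used for Fig.~\ref{fig:1}. Your remark that the criterion depends only on $|\rho|$, so the off-diagonal phase is immaterial, is a correct and worthwhile observation that the paper handles implicitly via the phase $\varphi$ in Eq.~(\ref{eq:singlequbit_2}).
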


\begin{proof}[Proof of Corollary 2]
	Let us consider a qubit system. In the basis $\{|0\rangle,|1\rangle\}$, a quantum state can be generally expressed as
	\begin{eqnarray}
		\rho=\frac{1}{2}\left( \begin{array}{cc}1+r \cos \theta & e^{-i \varphi} r \sin \theta \\ e^{i \varphi} r \sin \theta & 1-r \cos \theta\end{array}\right),
		\label{eq:singlequbit}
	\end{eqnarray}
	where the parameters satisfy $0 < r \leq 1,0<\theta<\pi$, and $0 \leq \varphi \leq 2 \pi$ for coherent states.
	
	For this state, we have
	\begin{eqnarray}
		\rho_{d}^{-\frac{1}{2}}|\rho| \rho_{d}^{-\frac{1}{2}}=\frac{1}{2}\left( \begin{array}{cc}1 & \frac{r|\sin \theta|}{\sqrt{1-r^{2} \cos ^{2} \theta}} \\ \frac{r|\sin \theta|}{\sqrt{1-r^{2} \cos ^{2} \theta}} & 1\end{array}\right).
	\end{eqnarray}
	The qubit state $\rho$ can never be enhanced under sSIOs mean that $\max_{{  \Lambda}_s}C_{l_1}\left({\Lambda}_s(\rho)\right)=C_{l_{1}}(\rho)$, i.e.,
	\begin{equation}
		{ \lambda}_{\max}(\rho_d^{-\frac12}\abs{\rho}\rho_d^{-\frac12})-1=\frac{r|\sin \theta|}{\sqrt{1-r^{2} \cos ^{2} \theta}}=r|\sin \theta|.
	\end{equation}
	This means that $\cos \theta$ is 0.
	
	Substituting $\cos \theta=0$ into the equation (\ref{eq:singlequbit}), we have
	\begin{eqnarray}
		\rho=\frac{1}{2}\left( \begin{array}{cc}1 & e^{-i \varphi} r \\ e^{i \varphi} r  & 1\end{array}\right),
		\label{eq:singlequbit_2}
	\end{eqnarray}
	which is equivalent to Eq.(\ref{singlecondition}).
\end{proof}

We visualize the states that can neither be deterministically purified nor probabilistically enhanced in the Bloch sphere, as shown in Fig.~\ref{fig:1}. Geometrically, these states lie on the equatorial plane of the Bloch sphere. 

\begin{figure}
	\centering
	\includegraphics[width=0.5\columnwidth]{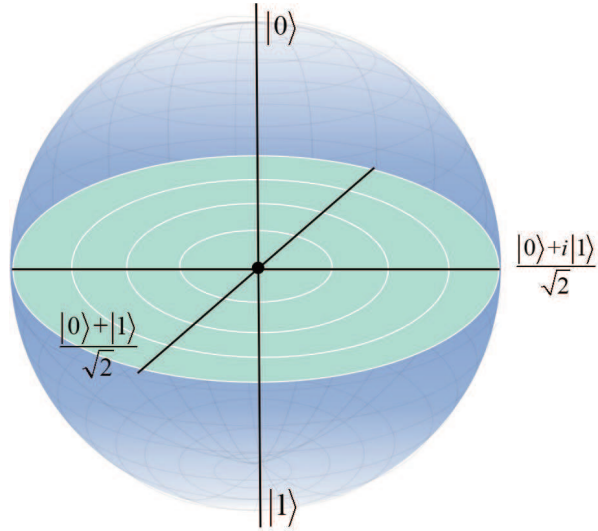}
	\caption{Schematic plot of qubit states that can neither be deterministically purified nor probabilistically enhanced. Geometrically, these states are in green at the Bloch sphere.}    
	\label{fig:1}
\end{figure}

\section{Conclusions}
\label{sec:sectionV}

In summary, we have provided two no-go theorems for the deterministic purification and probabilistic enhancement of coherence, respectively. More explicitly, we have shown that any free operations in the resource theory of coherence cannot complete the deterministic purification of coherence when the aim state can be written as a convex combination of an incoherent state and a coherent state. Besides, we have presented a simple condition to distinguish whether a state can be probabilistically enhanced via sSIO. 

Our results can be regarded as the complement for the deterministic purification and probabilistic enhancement of coherence and provide a theoretical criterion for experimental deterministic purification and probabilistic enhancement of coherence. Furthermore, the no-go theorem for deterministic purification of coherence may be further extended to any convex resource theory, which also provides a simple proof of no-go theorems for quantum resource deterministic purification~\cite{fang_NoGo_2020}.


\ack
We thank C. L. Liu for thoroughly reading the manuscript, and for many suggestions, corrections, and comments, which have certainly helped to improve this manuscript. We are grateful to two anonymous referees for providing very useful comments on earlier versions of this manuscript. This work was supported by NSF China through Grant No. 11575101.

\bibliographystyle{iopart-num}

\begin{thebibliography}{10}
\expandafter\ifx\csname url\endcsname\relax
  \def\url#1{{\tt #1}}\fi
\expandafter\ifx\csname urlprefix\endcsname\relax\def\urlprefix{URL }\fi
\providecommand{\eprint}[2][]{\url{#2}}

\bibitem{nielsen_Quantum_2010}
Nielsen M~A and Chuang I~L 2010 {\em Quantum Computation and Quantum
  Information\/} 10th ed ({Cambridge ; New York}: {Cambridge University Press})

\bibitem{giovannetti_Advances_2011}
Giovannetti V, Lloyd S and Maccone L 2011 {\em Nat. Photonics\/} {\bf 5}
  222--229

\bibitem{demkowicz-dobrzanski_Using_2014}
{Demkowicz-Dobrza{\'n}ski} R and Maccone L 2014 {\em Phys. Rev. Lett.\/} {\bf
  113} 250801

\bibitem{cwiklinski_Limitations_2015}
{\'C}wikli{\'n}ski P, Studzi{\'n}ski M, Horodecki M and Oppenheim J 2015 {\em
  Phys. Rev. Lett.\/} {\bf 115} 210403

\bibitem{lostaglio_Description_2015}
Lostaglio M, Jennings D and Rudolph T 2015 {\em Nat Commun\/} {\bf 6} 6383

\bibitem{lostaglio_Quantum_2015}
Lostaglio M, Korzekwa K, Jennings D and Rudolph T 2015 {\em Phys. Rev. X\/}
  {\bf 5} 021001

\bibitem{narasimhachar_Lowtemperature_2015}
Narasimhachar V and Gour G 2015 {\em Nat Commun\/} {\bf 6} 7689

\bibitem{lloyd_Quantum_2011}
Lloyd S 2011 {\em J. Phys.: Conf. Ser.\/} {\bf 302} 012037

\bibitem{huelga_Vibrations_2013}
Huelga S~F and Plenio M~B 2013 {\em Contemp. Phys.\/} {\bf 54} 181--207

\bibitem{lambert_Quantum_2013}
Lambert N, Chen Y~N, Cheng Y~C, Li C~M, Chen G~Y and Nori F 2013 {\em Nat
  Phys\/} {\bf 9} 10--18

\bibitem{baumgratz_Quantifying_2014}
Baumgratz T, Cramer M and Plenio M~B 2014 {\em Phys. Rev. Lett.\/} {\bf 113}
  140401

\bibitem{streltsov_Colloquium_2017}
Streltsov A, Adesso G and Plenio M~B 2017 {\em Rev. Mod. Phys.\/} {\bf 89}
  041003

\bibitem{hu_Quantum_2018}
Hu M~L, Hu X, Wang J, Peng Y, Zhang Y~R and Fan H 2018 {\em Phys. Rep.\/} {\bf
  762--764} 1--100

\bibitem{chitambar_Quantum_2019}
Chitambar E and Gour G 2019 {\em Rev. Mod. Phys.\/} {\bf 91} 025001

\bibitem{aberg_Quantifying_2006}
Aberg J 2006 {\em arXiv:quant-ph/0612146\/} (\textit{Preprint}
  \eprint{quant-ph/0612146})

\bibitem{winter_Operational_2016}
Winter A and Yang D 2016 {\em Phys. Rev. Lett.\/} {\bf 116} 120404

\bibitem{yadin_Quantum_2016}
Yadin B, Ma J, Girolami D, Gu M and Vedral V 2016 {\em Phys. Rev. X\/} {\bf 6}
  041028

\bibitem{vicente_Genuine_2016}
de~Vicente J~I and Streltsov A 2016 {\em J. Phys. A: Math. Theor.\/} {\bf 50}
  045301

\bibitem{chitambar_Comparison_2016}
Chitambar E and Gour G 2016 {\em Phys. Rev. A\/} {\bf 94} 052336

\bibitem{chitambar_Critical_2016}
Chitambar E and Gour G 2016 {\em Phys. Rev. Lett.\/} {\bf 117} 030401

\bibitem{napoli_Robustness_2016}
Napoli C, Bromley T~R, Cianciaruso M, Piani M, Johnston N and Adesso G 2016
  {\em Phys. Rev. Lett.\/} {\bf 116} 150502

\bibitem{yu_Alternative_2016}
Yu X~D, Zhang D~J, Xu G~F and Tong D~M 2016 {\em Phys. Rev. A\/} {\bf 94}
  060302

\bibitem{liu_New_2017}
Liu C~L, Zhang D~J, Yu X~D, Ding Q~M and Liu L 2017 {\em Quantum Inf Process\/}
  {\bf 16} 198

\bibitem{qi_Measuring_2017}
Qi X, Gao T and Yan F 2017 {\em J. Phys. A: Math. Theor.\/} {\bf 50} 285301

\bibitem{zhou_Polynomial_2017}
Zhou Y, Zhao Q, Yuan X and Ma X 2017 {\em New J. Phys.\/} {\bf 19} 123033

\bibitem{xi_Coherence_2019}
Xi Z and Yuwen S 2019 {\em Phys. Rev. A\/} {\bf 99} 022340

\bibitem{xi_Epsilonsmooth_2019}
Xi Z and Yuwen S 2019 {\em Phys. Rev. A\/} {\bf 99} 012308

\bibitem{yao_Quantum_2019}
Yao Y, Li D and Sun C~P 2019 {\em Phys. Rev. A\/} {\bf 100} 032324

\bibitem{yu_Quantifying_2020}
Yu D, Zhang L and Yu C 2020 {\em Phys. Rev. A\/} {\bf 101} 062114

\bibitem{li_Quantum_2021}
Li L, Wang Q~W, Shen S~Q and Li M 2021 {\em Phys. Rev. A\/} {\bf 103} 012401

\bibitem{cui_Maximalvalue_2020}
Cui X~D, Du M~M and Tong D~M 2020 {\em Phys. Rev. A\/} {\bf 102} 032419

\bibitem{bromley_Frozen_2015}
Bromley T~R, Cianciaruso M and Adesso G 2015 {\em Phys. Rev. Lett.\/} {\bf 114}
  210401

\bibitem{yu_Measureindependent_2016}
Yu X~D, Zhang D~J, Liu C~L and Tong D~M 2016 {\em Phys. Rev. A\/} {\bf 93}
  060303

\bibitem{streltsov_Entanglement_2016}
Streltsov A, Chitambar E, Rana S, Bera M~N, Winter A and Lewenstein M 2016 {\em
  Phys. Rev. Lett.\/} {\bf 116} 240405

\bibitem{ma_Coherence_2019}
Ma J, Hakande A, Yuan X and Ma X 2019 {\em Phys. Rev. A\/} {\bf 99} 022328

\bibitem{zhang_Estimating_2018}
Zhang D~J, Liu C~L, Yu X~D and Tong D~M 2018 {\em Phys. Rev. Lett.\/} {\bf 120}
  170501

\bibitem{yu_Detecting_2019}
Yu X~D and G{\"u}hne O 2019 {\em Phys. Rev. A\/} {\bf 99} 062310

\bibitem{ding_Efficient_2021}
Ding Q~M, Fang X~X, Yuan X, Zhang T and Lu H 2021 {\em Phys. Rev. Research\/}
  {\bf 3} 023228

\bibitem{hillery_Coherence_2016}
Hillery M 2016 {\em Phys. Rev. A\/} {\bf 93} 012111

\bibitem{shi_Coherence_2017}
Shi H~L, Liu S~Y, Wang X~H, Yang W~L, Yang Z~Y and Fan H 2017 {\em Phys. Rev.
  A\/} {\bf 95} 032307

\bibitem{ma_Operational_2019}
Ma J, Zhou Y, Yuan X and Ma X 2019 {\em Phys. Rev. A\/} {\bf 99} 062325

\bibitem{yuan_Intrinsic_2015}
Yuan X, Zhou H, Cao Z and Ma X 2015 {\em Phys. Rev. A\/} {\bf 92} 022124

\bibitem{chitambar_Assisted_2016}
Chitambar E, Streltsov A, Rana S, Bera M~N, Adesso G and Lewenstein M 2016 {\em
  Phys. Rev. Lett.\/} {\bf 116} 070402

\bibitem{fang_Probabilistic_2018}
Fang K, Wang X, Lami L, Regula B and Adesso G 2018 {\em Phys. Rev. Lett.\/}
  {\bf 121} 070404

\bibitem{regula_OneShot_2018}
Regula B, Fang K, Wang X and Adesso G 2018 {\em Phys. Rev. Lett.\/} {\bf 121}
  010401

\bibitem{chen_Oneshot_2019}
Chen S, Zhang X, Zhou Y and Zhao Q 2019 {\em Phys. Rev. A\/} {\bf 100} 042323

\bibitem{torun_optimal_2019}
Torun G, Lami L, Adesso G and Yildiz A 2019 {\em Phys. Rev. A\/} {\bf 99}
  012321

\bibitem{zhao_OneShot_2019}
Zhao Q, Liu Y, Yuan X, Chitambar E and Winter A 2019 {\em IEEE Trans. Inf.
  Theory\/} {\bf 65} 6441--6453

\bibitem{du_Conditions_2015}
Du S, Bai Z and Guo Y 2015 {\em Phys. Rev. A\/} {\bf 91} 052120

\bibitem{bu_Catalytic_2016}
Bu K, Singh U and Wu J 2016 {\em Phys. Rev. A\/} {\bf 93} 042326

\bibitem{bu_Maximum_2017}
Bu K, Singh U, Fei S~M, Pati A~K and Wu J 2017 {\em Phys. Rev. Lett.\/} {\bf
  119} 150405

\bibitem{zhang_Oneshot_2020}
Zhang S, Luo Y, Shao L~H, Xi Z and Fan H 2020 {\em Phys. Rev. A\/} {\bf 102}
  052405

\bibitem{liu_Catalystassisted_2020}
Liu C~L and Zhou D~L 2020 {\em Phys. Rev. A\/} {\bf 101} 012313

\bibitem{pang_Probabilistic_2020}
Pang Z~Y and Zhao M~J 2020 {\em Quantum Inf Process\/} {\bf 19} 363

\bibitem{lami_Completing_2020}
Lami L 2020 {\em IEEE Trans. Inf. Theory\/} {\bf 66} 2165--2183

\bibitem{xing_Reduce_2020}
Xing H~B and Yang M 2020 {\em Results Phys.\/} {\bf 17} 103111

\bibitem{liu_Optimal_2021}
Liu C~L and Sun C~P 2021 {\em Phys. Rev. Research\/} {\bf 3} 043220

\bibitem{ding_Tightness_2021}
Ding Q~M, Fang X~X and Lu H 2021 {\em Entropy\/} {\bf 23} 1519

\bibitem{wu_Experimentally_2017}
Wu K~D, Hou Z, Zhong H~S, Yuan Y, Xiang G~Y, Li C~F and Guo G~C 2017 {\em
  Optica\/} {\bf 4} 454--459

\bibitem{wu_Quantum_2020}
Wu K~D, Theurer T, Xiang G~Y, Li C~F, Guo G~C, Plenio M~B and Streltsov A 2020
  {\em npj Quantum Inf.\/} {\bf 6} 1--9

\bibitem{wu_Experimental_2021}
Wu K~D, Streltsov A, Regula B, Xiang G~Y, Li C~F and Guo G~C 2021 {\em Adv.
  Quantum Technol.\/} {\bf 4} 2100040

\bibitem{xiong_Experimental_2021b}
Xiong S~J, Sun Z, Su Q~P, Xi Z~J, Yu L, Jin J~S, Liu J~M, Nori F and Yang C~P
  2021 {\em Optica\/} {\bf 8} 1003--1008

\bibitem{liu_Deterministic_2019}
Liu C~L and Zhou D~L 2019 {\em Phys. Rev. Lett.\/} {\bf 123} 070402

\bibitem{liu_Enhancing_2017}
Liu C~L, Guo Y~Q and Tong D~M 2017 {\em Phys. Rev. A\/} {\bf 96} 062325

\bibitem{peng_Maximally_2016}
Peng Y, Jiang Y and Fan H 2016 {\em Phys. Rev. A\/} {\bf 93} 032326

\bibitem{bu_Asymmetry_2018}
Bu K, Anand N and Singh U 2018 {\em Phys. Rev. A\/} {\bf 97} 032342

\bibitem{yao_Anomalies_2020}
Yao Y, Li D and Sun C~P 2020 {\em Phys. Rev. A\/} {\bf 102} 032406

\bibitem{rudolph_Quantum_2004}
Rudolph T and Spekkens R~W 2004 {\em Phys. Rev. A\/} {\bf 70} 052306

\bibitem{horn_Matrix_2012}
Horn R~A and Johnson C~R 2012 {\em Matrix Analysis\/} 2nd ed ({Cambridge ; New
  York}: {Cambridge University Press})

\bibitem{streltsov_Structure_2017}
Streltsov A, Rana S, Boes P and Eisert J 2017 {\em Phys. Rev. Lett.\/} {\bf
  119} 140402

\bibitem{fang_NoGo_2020}
Fang K and Liu Z~W 2020 {\em Phys. Rev. Lett.\/} {\bf 125} 060405

\end{thebibliography}
\providecommand{\newblock}{}

\end{document}